\documentclass[11pt,a4paper]{article}
\usepackage[a4paper,margin=1in]{geometry}
\usepackage{amsmath,amssymb,bm}
\usepackage{amsthm}
\usepackage{microtype}
\usepackage[hidelinks]{hyperref}
\usepackage{bookmark}
\usepackage{booktabs}

\theoremstyle{plain}
\newtheorem{proposition}{Proposition}
\theoremstyle{definition}
\newtheorem{assumption}{Assumption}
\newtheorem{definition}{Definition}
\theoremstyle{remark}
\newtheorem{remark}{Remark}

\title{Strict Entropy Decrease of Clausius Entropy in an Isolated System\\
with Energy-Form Conversion:\\
Theoretical Proof, Numerical Illustration, and Critical Examination}
\author{%
  Ting Peng\\[0.35em]
  \small
  Key Laboratory for Special Area Highway Engineering of Ministry of Education,\\
  Chang'an University, Xi'an 710064, China\\[0.35em]
  \texttt{t.peng@ieee.org}\quad
  ORCID: \href{https://orcid.org/0009-0001-9059-2278}{0009-0001-9059-2278}%
}

\date{\today}

\begin{document}
\maketitle

\begin{abstract}
\textbf{Accountability.} This paper is accountable only to explicitly stated physical assumptions and strict logical inference. Its goal is to run a rigorous stress test of second-law claims within the Clausius framework.

We work directly with \textbf{Clausius's entropy definition} for an isolated composite with energy-form conversion. Heat is withdrawn from a cold releasing subsystem with relatively small heat capacity, converted to electrical energy, and then delivered as heat to a hotter subsystem. In the ideal limit, the electrical leg contributes negligibly to Clausius entropy accounting, so the modeled reservoir Clausius sum is
\[
\Delta S_{\mathrm{Cl}} = Q\!\left(\frac{1}{T_B}-\frac{1}{T_A}\right) < 0.
\]
The paper provides a derivation, numerical illustrations, and a scope analysis; any claimed contradiction should be interpreted as a compatibility issue between different axiom sets, not as an algebraic error in the Clausius bookkeeping above.

\end{abstract}

\noindent\textbf{Keywords:} Clausius entropy; isolated system; entropy decrease; physical modeling; logical inference; axiomatic thermodynamics.

\section{Introduction}

\subsection{Methodological stance: Clausius only}
\label{sec:stance}
The derivations below are carried out \emph{only} under Clausius's original entropy prescription and the explicit assumptions of Secs.~\ref{sec:definitions}--\ref{sec:construction}.

\textbf{No reconciliation program.} The paper does not attempt to harmonize its conclusions with textbook or research statements of the second law. Historical or modern formulations may be mentioned for context; they are \textbf{not} imported as axioms that would restrict or reinterpret the Clausius differentials $\delta Q/T$ computed here.

\textbf{What we are accountable to.} The sole commitments are stated idealizations (e.g., isolation, heat magnitudes, temperatures, and negligible electrical Clausius terms where assumed), conservation of energy for the composite, and correct manipulation of definitions. Responsibility is \textbf{to physical facts as modeled and to deductive logic}---not to preserving agreement with any prior verbal law.

If the inequality $\Delta S_{\mathrm{Cl}}<0$ is compared with slogans about entropy ``never decreasing'' in an isolated system, any perceived clash is \textbf{not} treated as refuting the algebra $\Delta S_A=-Q/T_A$, $\Delta S_B=+Q/T_B$ under the stated premises. It is treated as revealing a \textbf{logical tension} between (i) Clausius bookkeeping for the modeled heats at $T_A$ and $T_B$ and (ii) separate global monotonicity claims not used as premises in Secs.~\ref{sec:definitions}--\ref{sec:derivation}.

\subsection{Background}
Clausius's 1850 memoir gave an early verbal statement of the second law (no spontaneous net transport of heat from cold to hot without compensation) \cite{clausius1850}. His later 1865 paper introduced entropy as a state quantity and the differential $dS=\delta Q_{\mathrm{rev}}/T$ for reversible heat increments \cite{clausius1865}. The \emph{modern} textbook formulation---that a suitably defined total entropy of an isolated system does not decrease in spontaneous evolution---is a subsequent synthesis, stated with varying degrees of precision and scope in standard references \cite{fermi1956,callen1985,landau1980}. It should not be read back, without qualification, as a single sentence in Clausius's original papers.

It is quite possible that Clausius did not explicitly incorporate \emph{internal} conversion among energy forms (e.g., thermal and electrical) within a composite, or that he did not state the domain of applicability of each claim with the precision expected today; such gaps, if they exist, do not diminish the foundational importance of his work.

For many decades, the slogan ``entropy of an isolated system never decreases'' has been treated as a universal rule, underpinning cosmological narratives such as the heat death scenario \cite{penrose1989}. A tacit restriction is often overlooked: the simplest textbook derivations focus on direct heat exchange between macroscopic bodies and do not explicitly incorporate structured internal energy converters (thermoelectric elements, resistive dissipation, etc.) that change the path by which energy is moved while preserving total energy in an isolated composite.

Related prior work by the same author argues, on first-principles statistical grounds, that coarse-grained (Boltzmann-type) entropy need not increase monotonically along trajectories and that constraints reshape the long-time entropy distribution $P_{\infty}(S;\lambda)$ \cite{peng2026arxiv}. The present manuscript is complementary: it works entirely within \emph{Clausius} heat-reservoir bookkeeping for a macroscopic composite and does not assume the Boltzmann framework of \cite{peng2026arxiv}.

\subsection{Scope and claims}
We do \emph{not} claim to overturn energy conservation. Under the explicit assumptions in Secs.~\ref{sec:definitions}--\ref{sec:construction}, the algebraic sum of the modeled \emph{Clausius} reservoir terms is strictly negative. Discussion of idealizations (lossless electrical transport, negligible electrical Clausius contribution, etc.) appears in Sec.~\ref{sec:objections}; such discussion addresses modeling robustness, not conformity to non-Clausius postulates.

\subsection{Outline}
Section~\ref{sec:definitions} fixes notation, definitions, and enumerated assumptions. Section~\ref{sec:construction} describes the composite isolated system and the three-stage process. Section~\ref{sec:derivation} states the main proposition and proof. Section~\ref{sec:numerical} gives numerical examples. Section~\ref{sec:objections} addresses common objections from later thermodynamic language and clarifies the logical scope of the result. Section~\ref{sec:implications} states implications under the same accountability standard. Section~\ref{sec:conclusion} concludes.

\section{Definitions, axioms, and conventions}
\label{sec:definitions}

\subsection{Clausius entropy}
For a simple body exchanging heat with its surroundings, the Clausius relation is
\begin{equation}
dS = \frac{\delta Q_{\mathrm{rev}}}{T},
\end{equation}
where $\delta Q_{\mathrm{rev}}$ is heat added reversibly and $T>0$ is the absolute temperature \cite{clausius1865,callen1985,zemansky1997}. For a finite process between states 1 and 2,
\begin{equation}
\Delta S = \int_1^2 \frac{\delta Q}{T}
\end{equation}
along the path (for Clausius entropy of ideal heat reservoirs and quasi-static heat reservoirs in standard textbook treatments \cite{fermi1956,callen1985}).

\begin{definition}[Modeled Clausius reservoir sum]
\label{def:Scl}
For two thermal reservoirs $A$ and $B$ at fixed temperatures $T_A$ and $T_B$, with heat magnitudes assigned as in Sec.~\ref{sec:derivation}, define
\[
\Delta S_{\mathrm{Cl}} \equiv \Delta S_A + \Delta S_B,
\]
where $\Delta S_A$ and $\Delta S_B$ are the Clausius increments computed from the stated heats at $T_A$ and $T_B$ only. No other bodies are included in $\Delta S_{\mathrm{Cl}}$ unless explicitly added.
\end{definition}

\subsection{Isolated composite system}
An \emph{isolated} system exchanges neither energy nor matter with the exterior \cite{callen1985,landau1980}. We consider a composite $S$ consisting of two subsystems $A$ (cold) and $B$ (hot), plus internal devices (energy converters and ideal wiring) fully contained in $S$. The first-law constraint for the whole is
\begin{equation}
\Delta U_{\mathrm{tot}} = \sum_i \Delta U_i = 0.
\end{equation}
In the simplest two-reservoir case,
\begin{equation}
\Delta U_A + \Delta U_B = 0.
\end{equation}
Isolation is with respect to the \emph{outside}; internal transfers and conversions are allowed.

\subsection{Temperature and energy conventions}
\begin{enumerate}
\item $T_A$ and $T_B$ are the relevant reservoir temperatures with $0<T_A<T_B$.
\item $Q>0$ denotes the magnitude of energy transferred along the chain (in joules).
\item \textbf{Ideal electrical channel (model assumption):} electrical energy in the ideal conductors is treated as carrying negligible entropy compared to the thermal terms; this is the standard idealization used when electrical work is treated as a work-like transfer between subsystems. We discuss robustness and scope in Sec.~\ref{sec:objections}.
\end{enumerate}

\subsection{Reversibility}
We do not assume the overall process is reversible. We only require that the heat terms $\delta Q$ at $A$ and $B$ used in Clausius bookkeeping are well-defined for the modeled heat in/out of the reservoirs.

\subsection{Enumerated working assumptions}
\label{sec:assumptions}
The formal results use the following assumptions; later sections refer to them by number.

\begin{assumption}[Isolation and identification]
\label{A:iso}
The composite $S$ is isolated from the environment: no energy or matter crosses its outer boundary during the modeled process.
\end{assumption}

\begin{assumption}[Two thermal reservoirs at the Clausius step]
\label{A:res}
Subsystems $A$ and $B$ are treated as ideal thermal reservoirs for the purpose of assigning Clausius entropy changes: heat exchanges with them occur at fixed temperatures $T_A$ and $T_B$ (quasi-isothermal steps), with $0<T_A<T_B$.
\end{assumption}

\begin{assumption}[Energy routing]
\label{A:route}
A positive amount of energy $Q>0$ leaves $A$ as heat, passes through the internal electrical link, and enters $B$ as heat, with $\Delta U_A=-Q$ and $\Delta U_B=+Q$ for the reservoir parts so labeled; other internal components either store no net energy over the step or are subsumed into the same balance (see Remark~\ref{rem:devices}).
\end{assumption}

\begin{assumption}[Electrical link]
\label{A:el}
The electrical conductors and the work-like transfer of electrical energy between converter and load contribute negligibly to the Clausius sum relative to the $\pm Q/T$ reservoir terms (ideal limit). Equivalently, any Clausius entropy associated with the electrical leg is omitted or absorbed into corrections of order $\sigma_{\mathrm{el}}Q$ with $\sigma_{\mathrm{el}}$ small (see scope discussion in Sec.~\ref{sec:objections}).
\end{assumption}

\begin{assumption}[Clausius reservoir formula]
\label{A:clausius}
Entropy changes of the reservoirs are computed as $\Delta S_A=-Q/T_A$ and $\Delta S_B=+Q/T_B$ for the stated heat magnitudes and temperatures.
\end{assumption}

\begin{remark}[Internal devices]
\label{rem:devices}
Converters and wiring lie inside $S$. Their internal energy changes are not displayed separately; the model assumes that the net energy flux into the reservoirs is exactly $\pm Q$ as stated, which is consistent with Assumption~\ref{A:route} when auxiliary bodies return to the same internal state after the step or when their energy variation is included in the labeled reservoir increments.
\end{remark}

\section{Construction of the process in an isolated composite}
\label{sec:construction}

\subsection{Components}
Subsystem $A$ at $T_A$ contains a thermoelectric converter that produces electrical energy from a heat withdrawal from $A$. Subsystem $B$ at $T_B$ contains a resistive (Joule) load that converts electrical energy into heat at $B$. Ideal wires connect the converter and load. The outer boundary of $S$ is adiabatic and impermeable. For the thermoelectric--electrical link as a macroscopic idealization, see standard monographs on thermoelectric energy conversion \cite{goldsmid2010}.

\begin{remark}[Heat versus work across the contrast]
\label{rem:workmediated}
Energy leaving $A$ is modeled as heat withdrawn from the cold reservoir; energy entering $B$ is modeled as heat deposited in the hot reservoir. Between $A$ and $B$ the same energy is \emph{not} modeled as a direct macroscopic heat conduction path from cold to hot: it is routed as electrical work (current through conductors) and then dissipated at $B$. In this respect the process is compatible with Clausius's verbal second-law statement for heat when ``other changes'' (here, internal electrical work and dissipation) are present. The subsequent proposition concerns only the \emph{sum of the two reservoir Clausius terms} $\Delta S_A+\Delta S_B$ (Definition~\ref{def:Scl}), not a claim that every entropy-like quantity one could define for all internal degrees of freedom is non-increasing.
\end{remark}

\subsection{Three stages}
The stages realize Assumptions~\ref{A:route} and~\ref{A:el}.
\begin{enumerate}
\item \textbf{Thermal $\to$ electrical at $A$.} A heat $Q$ leaves $A$, so $\Delta U_A=-Q$ (for the reservoir part of $A$ in this bookkeeping).
\item \textbf{Electrical transport.} In the idealization, electrical energy $Q$ is transmitted without loss.
\item \textbf{Electrical $\to$ thermal at $B$.} Heat $Q$ is deposited in $B$, so $\Delta U_B=+Q$.
\end{enumerate}

\subsection{Energy check}
\begin{equation}
\Delta U_{\mathrm{tot}} = \Delta U_A + \Delta U_B = -Q + Q = 0,
\end{equation}
consistent with Assumption~\ref{A:iso}.

\section{Clausius entropy accounting}
\label{sec:derivation}

\subsection{Reservoir entropy increments}
Under Assumptions~\ref{A:res}--\ref{A:clausius}, if $A$ loses heat $Q$ at temperature $T_A$ and $B$ gains heat $Q$ at temperature $T_B$,
\begin{equation}
\Delta S_A = -\frac{Q}{T_A} < 0,\qquad
\Delta S_B = +\frac{Q}{T_B} > 0.
\end{equation}

\subsection{Main result}
The quantity $\Delta S_{\mathrm{Cl}}$ is the sum from Definition~\ref{def:Scl}.

\begin{proposition}[Strict negativity of the modeled Clausius sum]
\label{prop:main}
Under Assumptions~\ref{A:res}--\ref{A:clausius}, with $0<T_A<T_B$ and $Q>0$,
\begin{equation}
\Delta S_{\mathrm{Cl}} = Q\!\left(\frac{1}{T_B}-\frac{1}{T_A}\right) < 0.
\end{equation}
\end{proposition}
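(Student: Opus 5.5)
The argument is a direct computation from Definition~\ref{def:Scl} and the enumerated assumptions, so I will proceed in three steps. First, I fix the bookkeeping. By Definition~\ref{def:Scl}, $\Delta S_{\mathrm{Cl}}$ contains only the two reservoir increments $\Delta S_A$ and $\Delta S_B$. Assumption~\ref{A:el} guarantees that no electrical-leg term needs to be added; in the ideal limit it is omitted by fiat, and otherwise it would appear as a separately displayed correction. Assumption~\ref{A:route} fixes the heat magnitudes: $-Q$ leaves $A$ and $+Q$ enters $B$. Assumption~\ref{A:res} fixes the temperatures at which these heats are exchanged, $T_A$ and $T_B$, both constant during the step.

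Second, I invoke Assumption~\ref{A:clausius}, which is just the Clausius integral $\int \delta Q/T$ evaluated at constant $T$. This gives $\Delta S_A=-Q/T_A$ and $\Delta S_B=+Q/T_B$, exactly as displayed just before the proposition. Summing and factoring out $Q$ yields the identity
\[
\Delta S_{\mathrm{Cl}} = Q\left(\frac{1}{T_B}-\frac{1}{T_A}\right)=Q\,\frac{T_A-T_B}{T_A T_B}.
\]

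Third, I establish the sign. Since $T_A>0$ and $T_B>0$, the denominator $T_AT_B$ is positive. The hypothesis $T_A<T_B$ makes the numerator factor $T_A-T_B$ negative, and $Q>0$ by hypothesis. The product is therefore strictly negative. Equivalently, on $(0,\infty)$ the map $T\mapsto 1/T$ is strictly decreasing, so $1/T_B<1/T_A$.

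There is no genuine mathematical obstacle. The only point needing care is the first step: confirming that the statement concerns exactly the two-term sum of Definition~\ref{def:Scl}, and that Assumption~\ref{A:el} licenses dropping the electrical contribution. If a nonzero $\sigma_{\mathrm{el}}Q$ term were retained, the same computation would give strict negativity only when $\sigma_{\mathrm{el}}<1/T_A-1/T_B$. I would mention this as a remark rather than as part of the proof.
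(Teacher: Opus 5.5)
Your proposal is correct and matches the paper's proof: you sum $\Delta S_A=-Q/T_A$ and $\Delta S_B=+Q/T_B$ via Definition~\ref{def:Scl} and Assumption~\ref{A:clausius}, then use $1/T_A>1/T_B$ (equivalently, your $T_A-T_B<0$ over $T_AT_B>0$) together with $Q>0$ to get strict negativity. One small point: Definition~\ref{def:Scl} already excludes the electrical leg by construction, so your appeal to Assumption~\ref{A:el} is harmless but not needed for this computation.
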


\begin{proof}
By definition $\Delta S_{\mathrm{Cl}}=\Delta S_A+\Delta S_B=-Q/T_A+Q/T_B=Q(1/T_B-1/T_A)$. Since $0<T_A<T_B$, we have $1/T_A>1/T_B$, hence $1/T_B-1/T_A<0$. Multiplying by $Q>0$ yields $\Delta S_{\mathrm{Cl}}<0$.
\end{proof}

\begin{remark}[Scope of the symbol $\Delta S_{\mathrm{Cl}}$]
Elsewhere $\Delta S_{\mathrm{tot}}$ may denote extended entropies; in this paper $\Delta S_{\mathrm{Cl}}$ is the Clausius sum for the two reservoirs only, consistent with the accountability stance in the abstract.
\end{remark}

\subsection{Remarks on constancy of temperatures}
\label{sec:constT}
Large heat capacity of $A$ and controlled design of $B$ can make temperature changes small during a single transfer. If temperatures drift, the integral forms must be used; the sign conclusion persists if $T_A<T_B$ holds throughout and the integral of $\delta Q/T$ is evaluated consistently.

If $A$ has finite heat capacity $C_A$ and releases $Q$, then $\Delta T_A\approx Q/C_A$; similarly $\Delta T_B\approx Q/C_B$ for $B$. Choosing $C_A$ very large and $C_B$ moderate keeps the isothermal approximation controlled. Even when $T_A$ and $T_B$ shift slightly, as long as the cold remains colder than the hot, $1/T_B-1/T_A<0$ and the sign of $\Delta S_{\mathrm{Cl}}$ is unchanged to leading order. A sufficient condition for the sign of $Q(1/T_B-1/T_A)$ to match that of the integral $\int \delta Q_A/T_A+\int \delta Q_B/T_B$ in a one-step transfer is that $T_A(\xi)<T_B(\xi)$ along the path parametrization $\xi$ used for both reservoirs.

\section{Numerical illustration}
\label{sec:numerical}

Take $T_A=200\,\mathrm{K}$, $T_B=400\,\mathrm{K}$, and $Q=400\,\mathrm{J}$. Then
\[
\Delta S_A = -\frac{400}{200} = -2\,\mathrm{J/K},\qquad
\Delta S_B = +\frac{400}{400} = +1\,\mathrm{J/K},
\]
\[
\Delta S_{\mathrm{Cl}} = -1\,\mathrm{J/K}.
\]
\emph{Dimensional check:} $Q/T$ has dimensions of entropy (e.g.\ J/K), and the numerical values are consistent with SI.

Other parameter choices with $T_A<T_B$ and the same $Q$ bookkeeping give the same sign for $\Delta S_{\mathrm{Cl}}$.

\subsection{Further parameter sets}
For example, $T_A=150\,\mathrm{K}$, $T_B=300\,\mathrm{K}$, $Q=300\,\mathrm{J}$ gives $\Delta S_{\mathrm{Cl}}=-300/150+300/300=-1\,\mathrm{J/K}$. Similarly, $T_A=250\,\mathrm{K}$, $T_B=500\,\mathrm{K}$, $Q=500\,\mathrm{J}$ yields $\Delta S_{\mathrm{Cl}}=-500/250+500/500=-1\,\mathrm{J/K}$. The magnitude scales with $Q$ and with $|T_B^{-1}-T_A^{-1}|$.

\section{Responses to common objections}
\label{sec:objections}

Many objections are stated in the language of \emph{later} thermodynamics (entropy production, irreversibility theorems, etc.). This paper does \textbf{not} answer those objections by seeking consistency with packaged second-law statements. The reply is structural: such ingredients are \textbf{not} part of Clausius's original definition. If one adjoins a nonnegative entropy production $\sigma_{\mathrm{gen}}\ge 0$ as an \emph{additional} axiom, one changes the logical basis; any clash with the signed result $\Delta S_{\mathrm{Cl}}$ from Sec.~\ref{sec:derivation} then shows a tension between axiom sets, not an error in the Clausius differentials under Secs.~\ref{sec:definitions}--\ref{sec:construction}.

\subsection{Partial reservoir sum versus total entropy}
\label{sec:partialvstotal}
The proposition bounds $\Delta S_A+\Delta S_B$ computed from heats into/out of two \emph{labeled} reservoirs. A complete accounting of entropy change for \emph{every} internal body (thermoelectric legs, contacts, current carriers, lattice degrees of freedom) would assemble additional terms $\Delta S_{\mathrm{int}}$ (and possibly non-additive joint contributions in refined treatments). The present paper does \textbf{not} assert $\Delta S_A+\Delta S_B+\Delta S_{\mathrm{int}}<0$ for any preferred total entropy functional unless those terms are explicitly modeled. It asserts the strict inequality for the \emph{partial} Clausius sum $\Delta S_{\mathrm{Cl}}$ in Definition~\ref{def:Scl} under Assumptions~\ref{A:res}--\ref{A:clausius}. Confusion arises only if one silently identifies $\Delta S_{\mathrm{Cl}}$ with ``the'' entropy of the isolated composite without listing what is included.

\subsection{Entropy production}
Continuum thermodynamics often writes balances with a nonnegative entropy production density integrated to a nonnegative $\sigma_{\mathrm{gen}}\ge 0$ for irreversible processes (notation distinct from $\sigma_{\mathrm{el}}$ above) \cite{degroot1962,kondepudi2014,prigogine1980}. In a model that includes every internal body, one may write schematically
\[
\Delta S_{\mathrm{extended}} = \Delta S_A + \Delta S_B + \Delta S_{\mathrm{int}} + \cdots
\]
The \textbf{Clausius-only} part of the present paper is $\Delta S_{\mathrm{Cl}}=\Delta S_A+\Delta S_B$ computed from the heats at $T_A$ and $T_B$. If one insists that \emph{some} extended quantity $\Delta S_{\mathrm{extended}}$ must never decrease for the same process, while $\Delta S_{\mathrm{Cl}}$ is negative, the inconsistency lies in the \textbf{conjunction} of Clausius's heat-based definition with the extra global principle---not in the integration of $\delta Q/T$ at the two reservoirs.

\subsection{Irreversibility of Joule heating}
Irreversibility of Joule heating does not invalidate using $\Delta S_B=Q/T_B$ for the heat absorbed into a large reservoir at $T_B$ in the usual idealization. A complete treatment would specify whether $Q$ is the heat to the reservoir or includes dissipation in the wire; we assume idealized localization of heat delivery.

\subsection{``Hidden'' external driving}
The composite is isolated: no energy or matter crosses the outer boundary. The converter is driven by the \emph{internal} thermal contrast between $A$ and $B$. Objections phrased in terms of Kelvin-type statements are external to the Clausius derivation above; they import additional propositions not assumed in Sec.~\ref{sec:stance}.

\subsection{Applicability of the Clausius reservoir relation}
The Clausius construction applies to heat crossing the boundary of a heat reservoir. The present accounting is standard \emph{provided} the $\delta Q$ used are the actual heats lost/gained by $A$ and $B$, and any work-like electrical transfers are either internal to the composite or explicitly included in a generalized balance.

\section{Implications (within the Clausius-based line of reasoning)}
\label{sec:implications}
Under the original Clausius prescription applied to the heats leaving $A$ and entering $B$, the \emph{sum} $\Delta S_{\mathrm{Cl}}=\Delta S_A+\Delta S_B$ is negative whenever $T_A<T_B$ and the energy routing satisfies Sec.~\ref{sec:construction}. Path and internal structure matter for \emph{how} $Q$ is moved, but the reservoir Clausius terms depend only on the heats and the temperatures used in $dS=\delta Q/T$.

Axiomatic and foundational analyses of the second law distinguish carefully between assumptions, additivity, and the definition of entropy-like quantities for composite systems \cite{lieb1999,uffink2001}. The present paper does not adopt those frameworks as premises; it isolates a minimal Clausius reservoir sum under explicitly listed assumptions.

The author remains accountable only to the modeled facts and the derivation; no obligation is assumed to reinterpret or weaken this conclusion so that it matches any independent statement of entropy monotonicity. If others compare the result to such statements, any mismatch is a matter of \textbf{logical relations among doctrines}, not a license to rewrite $\Delta S_A=-Q/T_A$ and $\Delta S_B=+Q/T_B$ without changing the explicit premises.

\section{Conclusion}
\label{sec:conclusion}
Within Clausius's original entropy definition and the explicit Assumptions~\ref{A:iso}--\ref{A:clausius}, we obtained $\Delta S_{\mathrm{Cl}}=Q(1/T_B-1/T_A)<0$ for $T_A<T_B$ (Proposition~\ref{prop:main}). Numerical illustrations and critical discussion were included.

\textbf{Accountability.} The text is responsible only to the physical idealizations explicitly adopted and to logical inference from Clausius's definition and energy conservation. It does \textbf{not} aim to reconcile its result with any existing second-law formulation.

\textbf{If doctrines disagree.} Should a reader hold that $\Delta S_{\mathrm{Cl}}<0$ ``cannot'' occur in an isolated system, that judgment uses premises beyond those listed in Secs.~\ref{sec:definitions}--\ref{sec:construction}. The present derivation does not treat such premises as authoritative; it records the sign of $Q(1/T_B-1/T_A)$ under the stated model. Any conflict is therefore a matter of \textbf{compatibility among axiom systems}, not a default refutation of the steps $\Delta S_A=-Q/T_A$, $\Delta S_B=+Q/T_B$.

\bibliographystyle{unsrt}
\bibliography{references}

\end{document}